\documentclass[a4paper]{llncs}
 
\usepackage[T1]{fontenc}
\usepackage[utf8]{inputenc}
\usepackage{url}

\usepackage{microtype}
\usepackage{nicefrac}
\usepackage{subfig}

\usepackage{comment}
\excludecomment{withproofs}

\usepackage{tikz}
\usetikzlibrary{arrows,shapes}

\usepackage{pgfplots}
\usepgfplotslibrary{external} 
\tikzexternalize

\tikzset{
  tw-white/.style={draw,circle,fill=white,inner sep=2},
  tw-black/.style={draw,circle,fill=black,inner sep=2},
  tw-empty/.style={draw,circle,fill=black,inner sep=0.2},
  hav-white/.style={draw,circle,fill=white,inner sep=2},
  hav-black/.style={draw,circle,fill=black,inner sep=2},
  hav-empty/.style={regular polygon,regular polygon sides=6,draw,inner sep=5},
  hav-white-c/.style={draw,dashed,circle,fill=white,inner sep=2},
  hav-black-c/.style={draw,circle,fill=gray,inner sep=2},
}

\newcommand{\keywords}[1]{\par\addvspace\baselineskip
\noindent\keywordname\enspace\ignorespaces#1}

\newcommand{\gamename}{\textsc}

\newcommand{\np}{\textsc{np}}
\newcommand{\pspace}{\textsc{pspace}}
\newcommand{\GG}{\textsc{gg}}
\newcommand{\Hav}{\textsc{havannah}}

\title{Havannah and TwixT are \textsc{pspace}-complete}


\author{Édouard Bonnet\and Florian Jamain\and Abdallah Saffidine
}

\institute{L\textsc{amsade}, Université Paris-Dauphine, \url{{edouard.bonnet, abdallah.saffidine}@dauphine.fr}\\\url{florian.jamain@lamsade.dauphine.fr}}


\pdfinfo{
/Title (Havannah and TwixT are PSPACE-complete)
/Author (Edouard Bonnet, Florian Jamain, and Abdallah Saffidine)
/Keywords (Complexity, Connection game, Havannah, TwixT, Generalized Geography, Hex, PSPACE)
}

\begin{document}

\mainmatter  

\maketitle

\begin{abstract}
  Numerous popular abstract strategy games ranging from \gamename{hex} and \gamename{havannah} to \gamename{lines of action} belong to the class of connection games.
  Still, very few complexity results on such games have been obtained since \gamename{hex} was proved \pspace-complete in the early eighties.

  We study the complexity of two connection games among the most widely played.
  Namely, we prove that \gamename{havannah} and \gamename{twixt} are \pspace-complete.

  The proof for \gamename{havannah} involves a reduction from \gamename{generalized geography} and is based solely on ring-threats to represent the input graph.
  On the other hand, the reduction for \gamename{twixt} builds up on previous work as it is a straightforward encoding of \gamename{hex}.

  \keywords{Complexity, Connection game, Havannah, TwixT, Generalized Geography, Hex, \pspace{}}
\end{abstract}

\section{Introduction}
A connection game is a kind of abstract strategy game in which players try to make a specific type of connection with their pieces~\cite{Browne2005}.
In many connection games, the goal is to connect two opposite sides of a board.
In these games, players take turns placing or/and moving pieces until they connect the two sides of the board.
\gamename{Hex}, \gamename{twixt}, and \gamename{slither} are typical examples of this type of game.
However, a connection game can also involve completing a loop (\gamename{havannah}) or connecting all the pieces of a color (\gamename{lines of action}).

A typical process in studying an abstract strategy game, and in particular a connection game, is to develop an artificial player for it by adapting standard techniques from the game search literature, in particular the classical Alpha-Beta algorithm~\cite{Anshelevich2002} or the more recent Monte Carlo Tree Search paradigm~\cite{BrownePWLCRTPSC2012,ArnesonHH2010}.
These algorithms explore an exponentially large game tree are meaningful when optimal polynomial time algorithms are impossible or unlikely.
For instance, tree search algorithms would not be used for \gamename{nim} and \gamename{Shannon's edge switching game} which can be played optimally and solved in polynomial time~\cite{BrunoW1970}.

The complexity class \pspace{} comprizes those problems that can be solved on a Turing machine using an amount of space polynomial in the size of the input.
The prototypical example of a \pspace{}-complete problem is the Quantified Boolean Formula problem (\textsc{qbf}) which can be seen as a generalization of \textsc{sat} allowing for variables to be both existentially and universally quantified.
Proving that a game is \pspace{}-hard shows that a variety of intricate problems can be encoded via positions of this game.
Additionally, it is widely believed in complexity theory that if a problem is \pspace-hard, then it admits no polynomial time algorithms.

For this reason, studying the computational complexity of games is a popular research topic.
The complexity class of \gamename{chess} and \gamename{go} was determined shortly after the very definition of these classes and 
other popular games have been classified since then~\cite{DemaineH2009,HearnDemaine2009}. 
More recently, we studied the complexity of trick taking card games which notably include \gamename{bridge}, \gamename{skat}, \gamename{tarot}, and \gamename{whist}~\cite{BonnetJamainSaffidine2013IJCAI}.
 
Connection games have received less attention.
Besides Even and Tarjan's proof that \gamename{Shannon's vertex switching game} is \pspace-complete~\cite{EvenTarjan1976} and Reisch's proof that \gamename{hex} is \pspace-complete~\cite{Reisch1981}, the only complexity results on connection games that we know of are the \pspace-completeness of virtual connection detection~\cite{Kiefer2003} in \gamename{hex}, the \np-completeness of dominated cell detection in \gamename{Shannon's vertex switching game}~\cite{BjornssonHJvR2007}, as well as an unpublished note showing that a problem related to \gamename{twixt} is \np-complete~\cite{MazzoniW1997}.\footnote{For a summary in English of Reisch's reduction, see Maarup's thesis~\cite{Maarup2005}.}

The two games that we study in this paper rank among the most notable connection games.
They were the main topic of multiple master's theses and research articles~\cite{Huber1983,MazzoniW1997,Moesker2009,UiterwijkM2009,TeytaudT2010,Lorentz2011,Ewalds2012}, and they both gave rise to competitive play.
High-level online competitive play takes place on \url{www.littlegolem.net}.
Finally, live competitive play can also be observed between human players at the Mind Sports Olympiads where an international \gamename{twixt} championship has been organized every year since 1997, as well as between \gamename{havannah} computer players at the ICGA Computer Olympiad since 2009.\footnote{See \url{www.boardability.com/game.php?id=twixt} and \url{www.grappa.univ-lille3.fr/icga/game.php?id=37} for details.}








\section{Havannah}
\gamename{havannah} is a 2-player connection game played on a hexagonal board paved by hexagons.
White and Black place a stone of their color in turn in an unoccupied cell.
Stones cannot be taken, moved nor removed.
Two cells are neighbors if they share an edge.
A group is a connected component of stones of the same color via the neighbor relation.
A player wins if they realize one of the three following different structures: a circular group, called \emph{ring}, with at least one cell, possibly empty, inside; a group linking two corners of the board, called \emph{bridge}; or a group linking three edges of the board, called \emph{fork}.

As the length of a game of \gamename{havannah} is polynomially bounded, exploring the whole game tree can be done with polynomial space, so \Hav{} is in \pspace.

In our reduction, the \gamename{havannah} board is large enough that the gadgets are far from the edges and the corners.
Additionally, the gadgets feature ring threats that are short enough that the bridges and forks winning conditions do not have any influence.
Before starting the reduction, we define threats and make two observations that will prove useful in the course of the reduction.

A \emph{simple threat} is defined as a move which threatens to realize a ring on the next move on a unique cell.
There are only two kinds of answers to a simple threat: either win on the spot or defend by placing a stone in the cell creating this very threat.
A \emph{double threat} is defined as a move which threatens to realize a ring on the next move on at least two different cells.
We will use \emph{threat} as a generic term to encompass both simple and double threats.
A \emph{winning sequence of threats} is defined as a sequence of simple threats ended by a double threat for one player such that the opponent's forced move never makes a threat.
Thus, when a player is not threatened and can initiate a winning sequence of threats, they do win.
To be more concise, we will denote by $W:a_1$,$a_2$; $a_3$,$a_4$; \dots; $a_{2n-1}$(,$a_{2n}$) the sequence of moves starting with White's move $a_1$, Black's answer $a_2$, and so on.
$a_{2n}$ is optional, for the last move of the sequence might be White's or Black's.
Similarly, $B:a_1$,$a_2$; $a_3$,$a_4$; \dots; $a_{2n-1}$(,$a_{2n}$) denotes the corresponding sequence of moves initiating by Black.
We will use the following lemmas multiple times:
\begin{lemma}
  \label{lem:simple-threat}
  If a player is not threatened, playing a simple threat forces the opponent to answer on the cell of the threat.
\end{lemma}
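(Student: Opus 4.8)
The plan is to derive the statement almost immediately from the definition of a simple threat together with the dichotomy on replies recorded just above it, the only genuine work being to discard the ``win on the spot'' possibility.

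First I would fix notation: say White is the player who is not threatened and who plays a simple threat, and let $c$ be the unique cell on which this move menaces to complete a ring on White's following turn (such a $c$ exists and is unique by the very definition of a simple threat). Next I would recall the observation stated just before the lemma, namely that the only replies Black can sensibly consider are to win outright or to occupy $c$; indeed, after any other Black move, White is free to play $c$ and complete the ring, so every such move loses on the spot. It therefore suffices to rule out that Black wins outright. Here the hypothesis enters: since White was not threatened in the position preceding the simple threat, Black had no move completing a winning structure there; and the only stone added by White's move is of White's colour, so --- as rings, bridges and forks are all monochromatic --- it cannot complete any group of Black's, nor otherwise help Black. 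Consequently Black still has no immediate winning move once White has played the simple threat, and the unique reply that does not lose is to play $c$, as claimed. In the reduction we moreover work far from the edges and corners, so ``threatened'' may here be read as referring to rings only, which makes this last step even more transparent.

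The main --- and essentially the only --- point requiring care is this monotonicity remark: one must be certain that playing the simple threat does not inadvertently hand Black an immediate win. That is safe precisely because all three winning configurations of \Hav{} consist of stones of a single colour, so adding a White stone can never benefit Black; the remainder of the argument is a direct unwinding of the definitions already in place, so I do not expect any further difficulty.
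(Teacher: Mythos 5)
Your proposal is correct and follows essentially the same argument as the paper's one-line proof: if the opponent does not occupy the threatened cell, the player completes the ring there, and the ``not threatened'' hypothesis excludes the opponent winning on the spot instead. The extra remark that the player's own (monochromatic) stone cannot create a winning move for the opponent is a careful touch the paper leaves implicit, but it does not change the route.
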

\begin{proof}
  Otherwise, no matter what have played their opponent, 
  placing a stone on the cell of the threat wins the game.
  \qed
\end{proof}
\begin{lemma}
\label{lem:double-threat}
If a player is not threatened, playing a double threat is winning.
\end{lemma}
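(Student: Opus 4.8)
The plan is to mimic the proof of Lemma~\ref{lem:simple-threat}, adding the observation that a single stone cannot defend two threatened cells at once. Assume without loss of generality that it is White's turn, that White is not threatened, and that White plays a double threat~$m$; by definition there are two distinct cells $c_1$ and $c_2$ on each of which White then threatens to complete a ring.

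First I would check that Black cannot win outright in reply to~$m$. Before~$m$ Black had no immediately winning move (White is not threatened, and given the placement of the gadgets only ring threats are relevant); moreover placing the White stone~$m$ cannot create a Black ring, since a ring is formed by stones of a single colour and the cell it encloses may be occupied by either player. Hence, whatever Black replies, the game continues and it is again White's turn.

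The crux is that Black's reply, a single stone on some cell~$d$, destroys at most one of the two threats. Indeed, for $i\in\{1,2\}$ the cycle that White would close by playing~$c_i$ consists, apart from~$c_i$, of cells already occupied by White just after~$m$; Black can occupy none of those, and occupying a cell enclosed by the cycle does not prevent the ring, so the only move that cancels the threat at~$c_i$ is~$c_i$ itself. Since $d$ equals at most one of $c_1,c_2$, at least one threat survives, and White realises the corresponding ring on the next move, winning the game --- any counter-threat Black may have built meanwhile is moot, because completing a ring ends the game immediately.

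The single point I would be careful about is precisely this last observation, that a defending move must be played on the threatened cell itself; it rests on the fact that every other cell of a would-be winning ring is already of the mover's colour and hence uncontestable, together with the rule that the enclosed cell need not be empty. Everything else is routine from the definitions, in the same spirit as Lemma~\ref{lem:simple-threat}.
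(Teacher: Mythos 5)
Your proof is correct and follows essentially the same route as the paper's: the opponent cannot win immediately (since the player is not threatened), and a single reply can block at most one of the two threatened cells, so the player completes a ring on the next move. You merely make explicit what the paper leaves implicit, namely that the only way to neutralize a ring threat at a cell is to occupy that very cell.
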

\begin{proof}
 The player is not threatened, so their opponent can not win at their turn.
 Let $u$ and $v$ be two cells of the double threat.
 If their opponent plays in $u$, the player wins by playing in $v$.
 If their opponent plays somewhere else, the player wins by playing in $u$.
 \qed
\end{proof}

\subsection{Generalized Geography.}
\label{sec:gg}
\gamename{Generalized geography} (\GG{}) is one of the first two-player games to have been proved \pspace-complete~\cite{Schaefer1978}.
It has been used to reduce to multiple games including \gamename{hex}, \gamename{othello}, and \gamename{amazons}~\cite{Reisch1981,FurtakKUB2005,DemaineH2009}.

In \GG{}, players take turns moving a token from vertex to vertex.
If the token is on a vertex $v$, then it can be moved to a vertex $v'$ neighboring $v$ provided $v'$ has not been visited yet.
A player wins when it is their opponent's turn and the oppoent has no legal moves.
An instance of \GG{} is a graph $G$ and an initial vertex $v_0$, and asks whether the first player has a winning strategy in the corresponding game.

We denote by $P(v)$ the set of predecessors of the vertex $v$ in $G$, and $S(v)$ the set of successors of $v$.
A vertex with in-degree $i$ and out-degree $o$ is called $(i,o)$-vertex.
The degree of a vertex is the sum of the in-degree and the out-degree, and the degree of $G$ is the maximal degree among all vertices of $G$.
If $V$ is the set of vertices of $G$ and $V'$ is a subset of vertices, then $G[V\setminus V']$ is the induced subgraph of $G$ where vertices belonging to $V'$ have been removed.

Lichtenstein and Sipser have proved that the game remained \pspace-hard even if $G$ was assumed to be bipartite and of degree at most 3~\cite{LichtensteinS1980}.
We will reduce from such a restriction of \GG{} to show that \Hav{} is \pspace-hard.
To limit the number of gadgets we need to create, we will also assume a few simplifications detailed below.
An example of a simplified instance of \GG{} can be found in Fig.~\ref{fig:gg-instance}.

\begin{figure}
  \centering
  \includegraphics{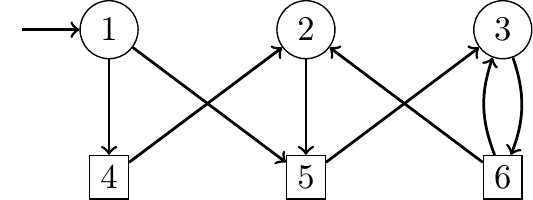}
  \caption{Example of an instance of \GG{} with vertex $1$ as initial vertex.}
  \label{fig:gg-instance}
\end{figure}

Let $(G,v_0)$ be an instance of \GG{} with $G$ bipartite and of degree at most 3.
We can assume that there is no vertex $v$ with out-degree 0 in $G$.
Indeed, if $v_0 \in P(v)$ then $(G,v_0)$ is trivially winning for Player 1.
Else, $(G[V \setminus (\{v\} \cup P(v))],v_0)$ is an equivalent instance, since playing in a predecessor of $v$ is losing.

All edges coming to the initial vertex $v_0$ can be removed to form an equivalent instance.
So, $v_0$ is a $(0,1)$-, a $(0,2)$-, or a $(0,3)$-vertex.
If $S(v_0)=\{v'\}$, then $(G[V \setminus \{v_0\}],v')$ is a strictly smaller instance such that Player 1 is winning in $(G,v_0)$ if and only if Player 1 is losing in $(G[V \setminus \{v_0\}],v')$.
If $S(v_0)=\{v',v'',v'''\}$, then Player 1 is winning in $(G,v_0)$ if and only if Player 1 is losing in at least one of the three instances $(G[V \setminus \{v_0\}],v')$, $(G[V \setminus \{v_0\}],v'')$, and $(G[V \setminus \{v_0\}],v''')$.
In those three instances $v'$, $v''$, and $v''$ are not $(0,3)$-vertices since they had in-degree at least 1 in $G$.
Therefore, we can also assume that $v_0$ is $(0,2)$-vertex.

We call an instance with an initial $(0,2)$-vertex and then only $(1,1)$-, $(1,2)$-, and $(2,1)$-vertices a \emph{simplified} instance.

In the following subsections we propose gadgets that encode the different parts of a \emph{simplified} instance of \GG{}.
These gadgets have starting points and ending points.
The gadgets are assembled so that the ending point of a gadget coincides with the starting point of the next one.
The resulting instance of \gamename{havannah} is such that both players must enter in the gadgets by a starting point and leave it by an ending point otherwise they lose.

\subsection{Edge gadgets.}
Wires, curves, and crossroads will enable us to encode the edges of the input graph.
In the representation of the gadgets, White and Black stones form the proper gadget.
Dashed stones and gray stones are respectively White and Black stones setting the context.

In the \gamename{havannah} board we name the 6 directions: North, North-West, 
South-West, South, South-East, and North-East according to standard designation.
While figures and lemmas are mostly presented from White's point of view, all the gadgets and lemmas work exactly the same way with colors reversed.

\paragraph{The wire gadget.}
Basically, a wire teleports moves: one player plays in a cell $u$ and their opponent has to answer in a possibly remote cell $v$.
$u$ is called the starting point of the wire and $v$ is called its ending point.
A wire where White prepares a threat and Black answers is called a WB-wire (Fig.~\ref{fig:hav-entire-wire}); conversely, we also have BW-wires.
We say that WB-wires and BW-wires are \emph{opposite} wires.
Note that wires can be of arbitrary length and can be curved with 120$^\circ$ angles (Fig.~\ref{fig:hav-curved-wire}).
On an empty board, a wire can link any pair of cells as starting and ending point provided they are far enough from each other.
\begin{figure}
  \centering
  \subfloat[Entire WB-wire which starts in $u$ and ends in $v$.]{
    \centering
    \includegraphics{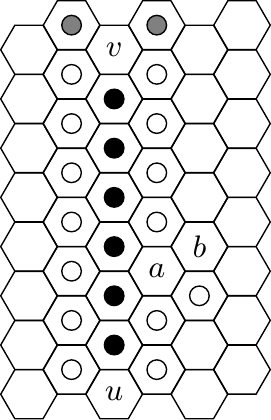}
    \label{fig:hav-entire-wire}
  }
  \hfill
  \subfloat[Curved fragment for a BW-wire.]{
    \centering
    \includegraphics{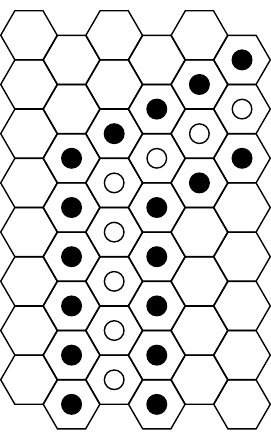}
    \label{fig:hav-curved-wire}
  }
  \hfill
  \subfloat[Crossover gadget.]{
    \centering
    \includegraphics{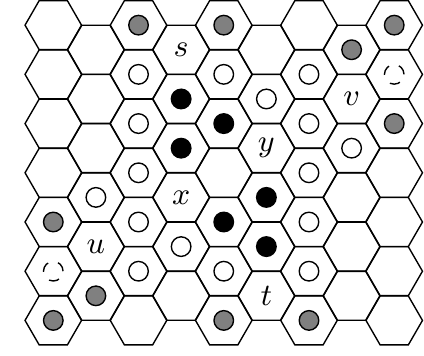}
    \label{fig:hav-crossroad}
  }
  \caption{Edge gadgets.}
  \label{fig:hav-wire}
\end{figure}

\begin{lemma}
\label{lem:hav-wire}
  If White plays in the starting point $u$ of a WB-wire (Fig.~\ref{fig:hav-entire-wire}), and Black does not answer by a threat, Black is forced to play in the ending point $v$ (possibly with moves at $a$ and $b$ interleaved).
\end{lemma}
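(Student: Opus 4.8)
The plan is to analyze the wire gadget cell by cell, tracking the forced responses that propagate along its length. First I would describe the static picture: the WB-wire of Fig.~\ref{fig:hav-entire-wire} consists of an alternating chain of White and Black stones (set by the context, drawn dashed and gray) such that the only ``gaps'' in the chain form a sequence of cells $u = c_0, c_1, \dots, c_k = v$, where each gap cell is the unique empty cell that would complete a ring through the neighbouring pre-placed stones if filled by the appropriate player. The key invariant is that filling $c_i$ creates a \emph{simple threat} on $c_{i+1}$: the stone at $c_i$, together with the context stones, surrounds all but one cell of a small ring, and that one missing cell is exactly $c_{i+1}$.

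The argument then proceeds by induction along the chain, invoking Lemma~\ref{lem:simple-threat} at each step. When White plays in $u = c_0$, this is a simple threat on $c_1$ (by the geometric invariant), and White is by hypothesis not otherwise creating problems; since Black does not answer with a threat of their own, Lemma~\ref{lem:simple-threat} forces Black to play in $c_1$. But now Black's stone at $c_1$ is itself a simple threat on $c_2$ (again by the invariant, with colours as appropriate for a WB-wire), so White is forced to answer in $c_2$, and so on. I would note that the ``moves at $a$ and $b$ interleaved'' clause handles the one or two cells in the gadget that are adjacent to a gap on both sides: when a player is forced to respond, they might equivalently be allowed to fill such a linking cell first without releasing the threat, but the threat remains live and the forced march continues, so the net effect is still that Black must eventually occupy $v$. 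The induction terminates at $c_k = v$, which is Black's turn to fill, giving the conclusion.

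The main obstacle I anticipate is \textbf{not} the threat-propagation logic — that is a clean induction on Lemmas~\ref{lem:simple-threat} and~\ref{lem:double-threat} — but rather the \emph{verification of the local ring geometry}: one must check, against the explicit hexagonal pattern of Fig.~\ref{fig:hav-entire-wire}, that at every gap cell the surrounding pre-placed stones genuinely leave exactly one completion cell (so the threat is simple, not double, and not already a completed ring), that no ring is accidentally completable elsewhere, and that the threatened rings are short enough that the bridge and fork winning conditions are irrelevant (as promised in the preamble, since the gadgets sit far from edges and corners). I would dispatch this by a careful case inspection of the finitely many cell types appearing in the periodic wire pattern — the straight segment has essentially one repeating motif, and the curved fragment of Fig.~\ref{fig:hav-curved-wire} contributes one extra $120^\circ$ motif — and by remarking that the WB/BW symmetry and the colour-reversal symmetry noted earlier reduce the work by a factor of two. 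Finally, I would observe that the hypothesis ``Black does not answer by a threat'' is exactly what prevents Black from hijacking the forced sequence, and that once this lemma is in hand, composing wires (ending point of one $=$ starting point of the next) is immediate.
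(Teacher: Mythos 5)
Your proposed mechanism does not match the gadget, and the mismatch is more than unverified geometry. In the paper's wire the pre-placed stones do all the work along its length: White's move at $u$ does not launch a chain of threats that crawls from cell to cell, it sets up a single \emph{remote} threat, namely that White's follow-up at $v$ would be a double threat whose two completion cells are $a$ and $b$. So $a$ and $b$ are not ``linking cells'' to be filled en route; they are Black's only alternative defences near the far end. The paper's proof is the corresponding short case analysis: if Black plays in neither $a$ nor $b$, White plays $v$ and wins by Lemma~\ref{lem:double-threat}; if Black plays $a$ (resp.\ $b$), White answers in $b$ (resp.\ $a$), which is a simple threat at $v$, so Black must play $v$ by Lemma~\ref{lem:simple-threat}. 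In particular the forced exchange has constant length, independent of how long the wire is, and it is the culminating double threat, not a propagating simple one, that makes ignoring the wire fatal.

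Your induction has two concrete gaps. First, the invariant ``filling $c_i$ creates a simple threat on $c_{i+1}$, with colours alternating'' is asserted rather than established, and it cannot hold for this gadget: a Black stone played merely to block a White ring does not thereby complete all but one cell of a Black ring, and the construction in fact \emph{requires} that the defender's forced answers create no threats --- this is built into the definition of a winning sequence of threats and into the hypothesis of the lemma itself (under your reading, Black's forced reply at $c_1$ would be a threat, so the hypothesis ``Black does not answer by a threat'' could never be met and the statement would be vacuous; it would also wreck the global argument in Theorem~\ref{hav-main-th}, where forced answers must never hand the defender counterplay). Second, even granting your invariant, the conclusion ``the induction terminates at $c_k=v$, which is Black's turn to fill'' is a parity claim you never justify: in an alternating chain, who ends up at $v$ depends on the length of the wire, whereas the lemma (and the way wires are composed with the other gadgets) needs Black to be the player who occupies $v$ after White's single move at $u$, with at most the $a$/$b$ exchange interleaved.
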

\begin{proof}
   If Black does not play neither in $a$ nor in $b$, then White plays in $v$ which makes a double threat in $a$ and $b$ and wins by Lemma~\ref{lem:double-threat}.
If Black plays in $a$ (resp.~in $b$), at the very least White can play in $b$ (resp.~in $a$) which forces Black to play in $v$ by Lemma~\ref{lem:simple-threat}.
   \qed
\end{proof}

\paragraph{The crossover gadget.} 
The input graph of \GG{} might not be planar, so we have to design a crossover gadget to enable two chains of wires to cross.
Fig.~\ref{fig:hav-crossroad} displays a crossover gadget, we have a South-West BW-wire with starting point $u$ which is linked to a North-East BW-wire with ending point $v$, and a North BW-wire with starting point $s$ is linked to a South BW-wire with ending point $t$.

\begin{lemma}
  \label{lem:hav-crossroad}
  In a crossover gadget (Fig.~\ref{fig:hav-crossroad}), if White plays in the starting point $u$, Black ends up playing in the ending point $v$ and if White plays in the starting point $s$, Black ends up playing in the ending point $t$.
\end{lemma}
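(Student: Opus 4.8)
The plan is to show that the crossover gadget of Fig.~\ref{fig:hav-crossroad} behaves exactly like two independent BW-wires superimposed on one another, so that the proof reduces to repeated applications of Lemma~\ref{lem:hav-wire}. Concretely, I would decompose the gadget into a chain of the elementary wire fragments used in Lemma~\ref{lem:hav-wire}: the $u$--$v$ channel is cut into the usual pieces, each consisting of a forwarding cell and its two auxiliary cells $a,b$, and likewise for the $s$--$t$ channel, the two chains sharing stones only in the central crossing region. The statement to establish is that, locally at the crossing, the Black and White stones belonging to one channel neither create a new ring threat for either player nor occupy any cell on which the other channel's threats rely; granting this, the two channels are genuinely independent.

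First I would analyse the $u$--$v$ channel in isolation, treating the stones of the $s$--$t$ channel as part of the fixed context (the dashed and gray stones). Starting from White's move in $u$, Lemma~\ref{lem:simple-threat} and Lemma~\ref{lem:double-threat} force Black fragment by fragment along the channel: at each step, White can complete a double threat unless Black plays the designated forwarding cell, exactly as in the proof of Lemma~\ref{lem:hav-wire}, possibly with that fragment's two auxiliary moves interleaved. Iterating down the channel pushes Black to play $v$. The same argument, with the two channels exchanged, forces Black to $t$ when White opens at $s$. Since a game of \gamename{havannah} terminates, these forced sequences genuinely reach $v$ and $t$; and, as in Lemma~\ref{lem:hav-wire}, the conclusion is conditional on Black not being able to reply with a threat of their own, which holds because all ring threats inside edge gadgets are short and local, so the forced march cannot be hijacked.

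The heart of the argument — and the step I expect to be the main obstacle — is the bookkeeping at the crossing hexagons: one must verify that they are arranged so that (i) no stone of the $s$--$t$ channel sits on a cell that White would need to form a double threat in the $u$--$v$ channel, and symmetrically; and (ii) in every partial configuration reachable along the forced lines, the union of stones from both channels neither closes a short ring nor hands either player an unintended double threat that would permit a deviation. This is a finite case check over the crossing configuration, and the gadget is designed precisely so that the two channels meet transversally, each channel's threat cells lying off the other channel. Once this local independence is checked, the global behaviour follows by concatenating the per-fragment forcings established in the second paragraph.
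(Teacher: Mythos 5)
There is a genuine gap here: the step you yourself single out as ``the heart of the argument'' --- the bookkeeping at the crossing --- is exactly the content of the lemma, and you leave it as an unperformed ``finite case check'' rather than carrying it out. A proof of this lemma has to read off the actual configuration of Fig.~\ref{fig:hav-crossroad}; asserting that the gadget is designed so that the two channels' threat cells do not interfere is a restatement of what must be shown, not an argument for it.

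Moreover, the decomposition you propose does not match how the gadget actually works. You model the crossover as two superimposed BW-wires, each forced along by the Lemma~\ref{lem:hav-wire} mechanism (White threatens, Black plays the forwarding cell, with auxiliary cells $a,b$ interleaved). In the paper's gadget the $u$--$v$ path is not a wire in this sense: White's move in $u$ is a simple threat forcing Black into an intermediate cell $x$, Black's forced reply in $x$ is itself a simple threat forcing White into $y$, and White's $y$ in turn forces Black into $v$ --- a chain of alternating simple threats (Lemma~\ref{lem:simple-threat} applied three times), with no double-threat forwarding involved. The $s$--$t$ path is even simpler: $s$ is a single simple threat whose unique answer is $t$. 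Independence of the two paths then falls out because these two short forced sequences use disjoint cells, which is immediate from the figure rather than the outcome of a fragment-by-fragment concatenation. So your plan of invoking Lemma~\ref{lem:hav-wire} per fragment would not go through at the crossing, and the missing local analysis cannot be supplied by the wire lemma; it is precisely the alternating-threat chain through $x$ and $y$ that lets one signal hop over the other channel.
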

\begin{proof}
  By Lemma~\ref{lem:simple-threat}, if White plays in $u$, Black has to play in $x$, forcing White to play in $y$, forcing finally Black to play in $v$.
  If White plays in $s$, again by Lemma~\ref{lem:simple-threat}, Black has to play in $t$.
  \qed
\end{proof}

Note that the South wire is linked to the North wire irrespective of whether the other pair of wires has been used and conversely.
That is, in a crossover gadget two paths are completely independent.

\subsection{Vertex gadgets.}
We now describe the gadgets encoding the vertices.
Recall from Section~\ref{sec:gg} that simplified \GG{} instances only feature $(1,2)$-, $(1,1)$-, and $(2,1)$-vertices, and a $(0,2)$-vertex.
One can encode a $(1,1)$-vertex with two consecutive opposite wires.
Thus, we will only present three vertex gadgets, one for $(2,1)$-vertices, one for $(1,2)$-vertices, and one for the $(0,2)$-vertex.

\paragraph{The (2,1)-vertex gadget.}

A $(2,1)$-vertex gadget receives two wire ending points.
If a stone is played on either of those ending points, it should force an answer in the starting point of a third wire.
That simulates a vertex with two edges going in and one edge going out.

\begin{figure}
  \centering
  \subfloat[Gadget before being used.
    The wires for the in-edges end in $u$ and $v$, the wire for the out-edge starts in $s$.]{
    \centering
    \includegraphics{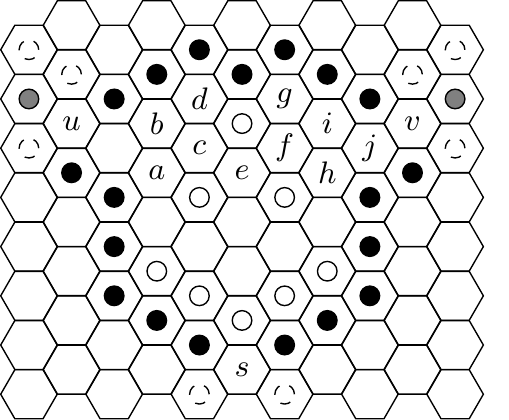}
    \label{fig:hav-21}
  }
  \hfill
  \subfloat[Gadget after being used and then reentered.
    White wins with a double threat.
  ]{
    \centering
    \includegraphics{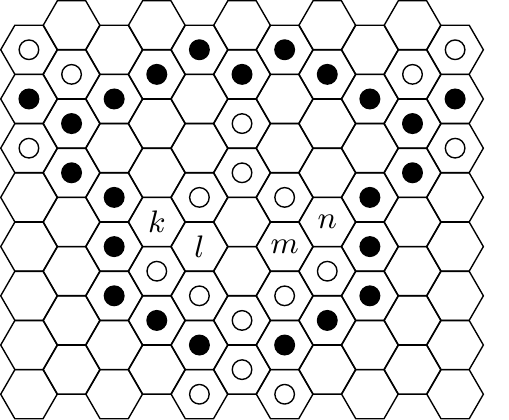}
    \label{fig:hav-reentering-21}
  }
  \caption{The $(2,1)$-vertex gadget links three WB-wires.
    The North-West and North-East ones end in $u$ and $v$, the South WB-wire starts in $s$.
  }
\end{figure}

\begin{lemma}\label{lem:hav-21}
  If Black plays in one of the two possible starting points $u$ and $v$ of a $(2,1)$-vertex gadget (Fig.~\ref{fig:hav-reentering-21}), and White does not answer by a threat, White is forced to play in the ending point $s$.
\end{lemma}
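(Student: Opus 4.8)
The plan is to reproduce, for the $(2,1)$-vertex gadget, the threat-tracing argument already used for the wire in Lemma~\ref{lem:hav-wire}, so that everything reduces to Lemmas~\ref{lem:simple-threat} and~\ref{lem:double-threat}. First I would invoke the mirror symmetry of the gadget that exchanges the North-West and North-East incoming wires (and hence $u$ with $v$) to assume, without loss of generality, that Black plays in $u$. I would then consider the Black group obtained by joining this new stone at $u$ with the Black stones of the gadget (Fig.~\ref{fig:hav-21}) and with the gray context stones of the incoming wire: by construction this group already encloses an empty region except for a short list of still-empty cells, and that list runs ``downstream'' towards $s$, with $s$ as its final cell.

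Next I would walk along that list of empty cells and show White cannot leave any of them free. If White plays neither in $s$ nor in the one or two intermediate cells, and does not himself create a threat, then Black plays the first free cell of the list and either completes a ring at once or leaves two cells at which a ring can be completed, that is, a double threat, winning by Lemma~\ref{lem:double-threat}. If instead White plugs an intermediate cell, Black's natural reply turns the configuration into a simple threat whose unique defending cell is the next one down the list, so by Lemma~\ref{lem:simple-threat} White must answer there; iterating, White is pushed all the way to $s$, possibly with interleaved moves in the intermediate cells, which is exactly the statement. I would close by noting, with the help of Fig.~\ref{fig:hav-reentering-21}, that once the gadget has been crossed in this way a later Black stone in the other starting point hands White an immediate double threat, so that the gadget behaves like a \GG{} vertex that may be visited only once.

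The step I expect to be the main obstacle is checking that this forcing is genuinely one-sided. I must verify that during the whole exchange White never has a ring threat of his own inside (or adjacent to) the gadget, for otherwise White could reply to one of Black's simple threats by a counter-threat and break out of the forced line; symmetrically I must check that none of White's forced replies happens to threaten a ring either, which would let Black, not White, dictate the continuation. Concretely this means inspecting the White stones, solid and dashed, around Fig.~\ref{fig:hav-21} and confirming that no White near-ring is ever one move from completion — the gadget is drawn precisely so that this holds, but it is the one point that has to be read off the figure rather than inferred from the earlier lemmas.
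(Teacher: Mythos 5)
There is a genuine gap: you model the $(2,1)$-vertex gadget as if it were a wire, i.e.\ a linear chain of empty cells ``running downstream'' to $s$ along which Black can push White by iterated simple threats, ``possibly with interleaved moves in the intermediate cells.'' That is not the structure of this gadget, and it is not what the lemma asserts. The paper's conclusion is that White's \emph{immediate} reply to Black's stone at $u$ must be $s$ itself (or a threat); the proof is a case split on White's reply, not a forcing chain. If White's reply is neither $s$, nor $v$, nor $j$ (Fig.~\ref{fig:hav-21}), Black plays $s$ and this is already a \emph{double} threat, winning by Lemma~\ref{lem:double-threat}. The two remaining replies $v$ and $j$ do block the immediate double threat at $s$, and refuting them is the real content of the proof: after White plays $v$, Black plays $s$ (simple threat at $j$, so White must take $j$ by Lemma~\ref{lem:simple-threat}), and then Black wins by an explicit winning sequence of threats running around the other side of the gadget, B: $a$,$b$; $c$,$d$; $h$,$i$; $f$, ending in a double threat at $g$ and $e$; the reply $j$ is handled similarly. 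In your scheme, a White move at an ``intermediate cell'' is supposed to be met by a simple threat whose unique defence is the next cell towards $s$, so that White eventually ends up at $s$ anyway; in the actual gadget, once White has played $v$ or $j$ White never reaches $s$ --- White simply loses --- and the refutation path ($a,b,c,d,h,i,f,g,e$) is not a corridor towards $s$ at all. So the key idea of the paper's proof (identify the only two blocking cells and kill each with a concrete threat sequence ending in a double threat elsewhere in the gadget) is absent from your proposal, and the mechanism you substitute for it would not establish the stated conclusion.

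Two smaller points. Your appeal to mirror symmetry to reduce the case $v$ to the case $u$ is acceptable in spirit (the paper just says the argument for $v$ is ``similar''), and your final remark about re-entry handing White a double threat belongs to a different statement (Lemma~\ref{lem:hav-reentering}), not to this one. Your worry about White counter-threats is partly moot, since the hypothesis already excludes a threatening reply by White; what does need checking on the figure --- and what you correctly flag --- is that in Black's winning sequence White's forced defences never themselves create a threat, which is exactly the condition in the paper's definition of a winning sequence of threats.
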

\begin{proof}
  Assume Black plays in $u$ and White answers by a move which is not in $s$ nor a threat.
  This move from White has to be either in $v$ or in $j$, otherwise, Black has a double threat by playing in $s$ and wins by Lemma~\ref{lem:double-threat}.
  Suppose White plays in $v$.
  Now, Black plays in $s$ with a simple threat in $j$, so White has to play in $j$ by Lemma~\ref{lem:simple-threat}.
  Then Black has the following winning sequence: B: $a$,$b$; $c$,$d$; $h$,$i$; $f$. Black has now a double threat in $g$ and $e$ and so wins by Lemma~\ref{lem:double-threat}.
  If White plays in $j$ instead of $v$, the argument is similar.

  If Black plays the first move in $v$, the proof that White has to play in $s$ is similar.
  \qed
\end{proof}

\paragraph{The (1-2)-vertex and (0,2)-vertex gadgets.}
  
A $(1,2)$-vertex gadget receives one ending point of a wire (Fig.~\ref{fig:hav-12}).
If a stone is played on this ending point, it should offer the choice to defend either by playing in the starting point of a second wire, or by playing in the starting point of a third wire.
That simulates a vertex with one edge going in and two edges going out.
The $(0,2)$-vertex gadget (or \emph{starting-vertex} gadget) can be seen as a $(1,2)$-vertex gadget where a stone has already been played on the ending point of the in-edge. The $(0,2)$-vertex gadget represents the two possible choices of the first player at the beginning of the game.

\begin{figure}
  \centering
  \subfloat[The $(1,2)$-vertex gadget.
    The wire for the in-edge ends in $u$.
  ]{
    \centering
    \hspace{1cm}
    \includegraphics{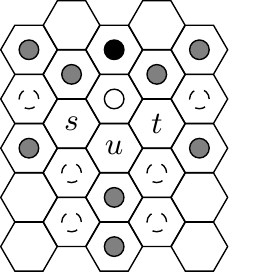}
    \hspace{5mm}
    \label{fig:hav-12}
  }
  \hfill
  \subfloat[The $(0,2)$-vertex gadget representing the starting vertex $v_0$.]{
    \centering
    \hspace{1cm}
    \includegraphics{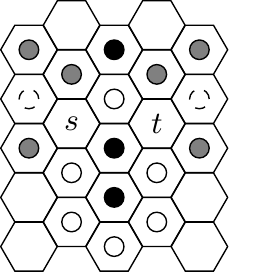}
    \hspace{1cm}
    \label{fig:hav-starting-vertex}
  }
  \caption{In these choice gadgets, White can defend by playing in $s$ or in $t$.
    A North-West BW-wire starts in $s$ and a North-East BW-wire starts in $t$.
  }
  \label{fig:hav-choice-vertices}
\end{figure}

\begin{lemma}
  \label{lem:hav-12}
  If Black plays in the starting point $u$ of a $(1,2)$-vertex gadget (Fig.~\ref{fig:hav-12}), and White does not play a threat, White is forced to play in one of the two ending points $s$ and $t$.
  Then, if Black does not answer by a threat, they have to play in the other ending point.
\end{lemma}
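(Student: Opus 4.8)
\emph{Proof plan.} The statement will be proved by the same template as Lemma~\ref{lem:hav-21}: after Black's stone lands on $u$, I examine every possible reply of White, rule out all replies except a threat, $s$, or $t$ by exhibiting a winning continuation for Black, and then run an entirely parallel argument for the second half (the forced move on the remaining ending point). Throughout I use that the gadget sits far from the board edges and corners and that all its ring threats are short, so that the bridge and fork winning conditions never interfere and only rings matter; the only tools are Lemmas~\ref{lem:simple-threat} and~\ref{lem:double-threat} together with an inspection of Fig.~\ref{fig:hav-12}.

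\emph{First half.} Black plays in $u$; let $m$ be White's reply, and suppose $m$ is not a threat. If $m\notin\{s,t\}$, I claim Black wins. Inspecting Fig.~\ref{fig:hav-12}, either Black already has a one-move double threat (and wins by Lemma~\ref{lem:double-threat}), or $m$ is one of a short list of ``internal'' cells of the gadget that happen to defuse it --- exactly the phenomenon played by the cells $v$ and $j$ in the proof of Lemma~\ref{lem:hav-21}. For each such internal value of $m$ I plan to write down an explicit winning sequence of threats for Black: a chain of simple threats, each forced by Lemma~\ref{lem:simple-threat}, terminated by a double threat, while checking move by move that none of White's forced replies along the chain is itself a threat, so that the sequence is genuinely winning in the sense defined before the lemmas. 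This leaves $m\in\{s,t\}$, as claimed.

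\emph{Second half.} Suppose White has played in one of $s,t$; by the left--right symmetry of the gadget (which swaps $s$ with $t$ and the North-West out-wire with the North-East one) we may assume it is $s$, and let $m'$ be Black's next move, not a threat. If $m'\neq t$, then symmetrically to the first half White wins: inspecting the figure, White has either a one-move double threat (obtained, I expect, by playing in $t$) or, when $m'$ is one of the internal cells, a winning sequence of threats; either way Lemma~\ref{lem:double-threat} finishes it. Hence Black is forced to play $m'=t$. Finally, the $(0,2)$-vertex gadget of Fig.~\ref{fig:hav-starting-vertex} is nothing but this gadget with a Black stone already placed on $u$, so the second half of this analysis also describes the behaviour of the starting-vertex gadget without further work.

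\emph{Main difficulty.} The conceptual content is just the two applications of Lemmas~\ref{lem:simple-threat} and~\ref{lem:double-threat}; as in Lemma~\ref{lem:hav-21}, the real effort is the board bookkeeping. One must pin down precisely which replies of White (resp.\ of Black) fall outside $\{s,t\}$ (resp.\ lie off $t$), are not threats, and yet are not already refuted by the one-move double threat, and then, for each, produce a concrete winning sequence of threats and verify along it that the opponent is never handed a threat of their own. Making sure that the list of internal cells is exhaustive and that no spurious ring threat of the defender arises ``for free'' is where all the care goes; the distance from the edges and corners is what keeps only rings relevant and the analysis finite.
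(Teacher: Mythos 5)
There is a genuine gap: your text is a plan, not a proof. The entire content of the lemma is the exhibition of the refuting continuations, and you defer exactly that (``I plan to write down an explicit winning sequence'', ``obtained, I expect, by playing in $t$''), so nothing is actually established. Worse, the framing of the plan would likely fail to find the refutation. You propose to inspect only Fig.~\ref{fig:hav-12} and a short list of ``internal'' cells, in analogy with the cells $v$ and $j$ of Lemma~\ref{lem:hav-21}. But the punishing line for the first half does not live inside the gadget: the point of the construction is that Black's reply in $s$ is \emph{simultaneously} a simple threat (forcing White into $t$ by Lemma~\ref{lem:simple-threat}) \emph{and} the starting point of the BW-wire attached at $s$, so that after White's forced $t$ Black wins by playing the ending point of that out-wire, the double threat being supplied by Lemma~\ref{lem:hav-wire}. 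A purely local inspection of the $(1,2)$-gadget figure cannot produce this sequence, because the decisive double threat sits in the attached wire, not in the gadget; and, contrary to your dichotomy, White's ``non-$s$, non-$t$, non-threat'' replies are not beaten by a one-move double threat nor by gadget-internal chains, but uniformly by this single three-move line, which also makes your case enumeration unnecessary.

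The second half has the same problem. The paper does not guess an ad hoc double threat for White at $t$; it observes that once White has answered in $s$, the cell $t$ plays the role of the ending point of the in-wire, so the conclusion (Black must play $t$ or a threat, else White wins) is again an application of Lemma~\ref{lem:hav-wire} rather than something to be re-derived by inspecting internal cells. Your symmetry reduction ($s\leftrightarrow t$) is fine, and your appeal to Lemmas~\ref{lem:simple-threat} and~\ref{lem:double-threat} is the right toolbox, but without identifying the two mechanisms above --- the dual role of $s$ (threat plus wire start) and the reinterpretation of $t$ as a wire ending point --- the argument is not there, and the bookkeeping you postpone is precisely the missing proof.
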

\begin{proof}
  Black plays in $u$.
  Suppose White plays neither in $s$ nor in $t$ nor a threatening move.
  Then Black plays in $s$.
  By Lemma~\ref{lem:simple-threat}, White has to play in $t$ but Black wins by playing in the ending point of the wire starting at $s$ by Lemma~\ref{lem:hav-wire}.

  Assume White's answer to $u$ is to play in $s$.
  $t$ can now be seen as the ending point of the in-wire, so Black needs to play in $t$ or make a threat by Lemma~\ref{lem:hav-wire}.
  \qed
\end{proof}

\begin{corollary}
 \label{lem:hav-starting-vertex}
  If White is forced to play a threat or to open the game in one of the two opening points $s$ and $t$ of the $(0,2)$-vertex gadget (Fig.~\ref{fig:hav-starting-vertex}).
  Then, if Black does not play a threat, they are forced to play in the other opening point.
\end{corollary}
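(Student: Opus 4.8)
The plan is to derive the corollary directly from Lemma~\ref{lem:hav-12}, exploiting the fact that the $(0,2)$-vertex gadget of Fig.~\ref{fig:hav-starting-vertex} is nothing but the $(1,2)$-vertex gadget of Fig.~\ref{fig:hav-12} in which a Black stone has already been placed on the in-edge ending point $u$ (the in-wire itself being simply omitted, since it is never used). Consequently, the position faced by White when the game opens in the $(0,2)$-gadget coincides, locally, with the position faced by White in the $(1,2)$-gadget immediately after Black has moved in $u$. So morally the corollary is just Lemma~\ref{lem:hav-12} with its first move already made.

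First I would check that this identification is faithful: deleting the in-wire neither creates nor destroys any cell participating in a ring threat inside the gadget, the pre-placed Black stone sits in exactly the cell Black would have occupied, and at the start of the game White is not threatened, so that Lemmas~\ref{lem:simple-threat} and~\ref{lem:double-threat} apply just as in the proof of Lemma~\ref{lem:hav-12}.

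Then I would replay that proof from the appropriate point. If White's opening move is neither in $s$, nor in $t$, nor a threat, then Black plays in $s$; by Lemma~\ref{lem:simple-threat} White must answer in $t$, and then Black wins by playing the ending point of the North-West BW-wire starting at $s$ (Lemma~\ref{lem:hav-wire}). Hence White is forced either to make a threat or to play in $s$ or in $t$. Suppose White plays in $s$ (the case $t$ is symmetric). Then $t$ now plays the role of the ending point of an in-wire, so by Lemma~\ref{lem:hav-wire} Black must either make a threat or play in $t$, which is exactly the claimed conclusion.

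The only genuinely delicate point is the first one: making sure that the truncated gadget of Fig.~\ref{fig:hav-starting-vertex} really is ring-threat-equivalent to Fig.~\ref{fig:hav-12} with $u$ filled --- that is, that removing the in-wire leaves every short ring threat used in the argument intact and gives White no new defensive resource. Once that is verified, the corollary is immediate from the argument above, and no further calculation is needed.
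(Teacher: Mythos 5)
Your proposal is correct and matches the paper's intent: the paper states this as a corollary of Lemma~\ref{lem:hav-12} precisely because the $(0,2)$-vertex gadget is the $(1,2)$-vertex gadget with the in-edge ending point already occupied by a Black stone, and you simply replay that lemma's argument from the point after Black's move in $u$. Your extra check that truncating the in-wire does not disturb the local ring threats is a reasonable bit of diligence the paper leaves implicit, but the route is the same.
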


\subsection{Assembling the gadgets together.}

Let $(G,v_0)$ be a simplified instance of \GG{}, and $n$ be its number of vertices.
$G$ being bipartite, we denote by $V_1$ the side of the partition containing $v_0$, and $V_2$ the other side.
Player 1 moves the token from vertices of $V_1$ to vertices of $V_2$ and player 2 moves the token from $V_2$ to $V_1$.
We denote by $\phi$ the reduction from \GG{} to \Hav{}.
Let us describe the construction of $\phi((G,v_0))$.
As an example, we provide the reduction from the \GG{} instance from Fig.~\ref{fig:gg-instance} in Fig.~\ref{fig:hav-reduction}.

\begin{figure}
  \centering
  \includegraphics{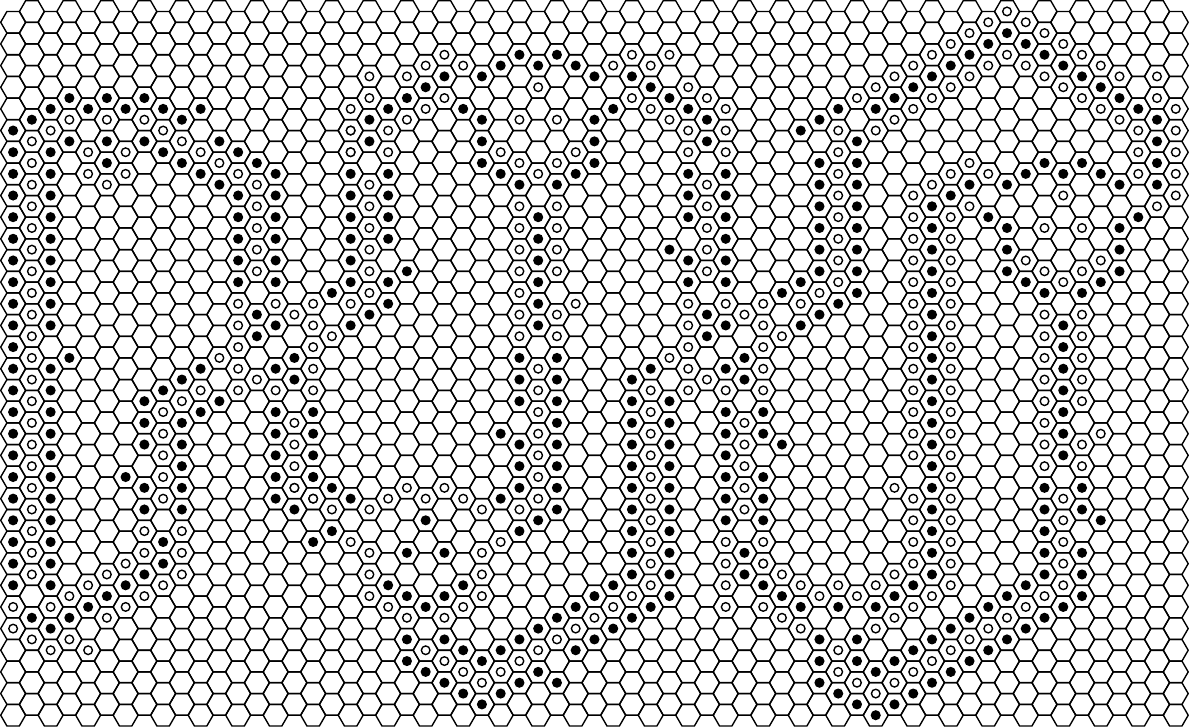}
  \caption{\gamename{havannah} gadgets representing the \GG{} instance from Fig.~\ref{fig:gg-instance}.}
  \label{fig:hav-reduction}
\end{figure}

The initial vertex $v_0$ is encoded by the gadget displayed in Fig.~\ref{fig:hav-starting-vertex}.
Each player 1's $(2,1)$-vertex is encoded by the $(2,1)$-vertex gadget of Fig.~\ref{fig:hav-21}, and each player 2's $(2,1)$-vertex is encoded by the same gadget in reverse color.
Each player 1's $(1,2)$-vertex is encoded by the $(1,2)$-vertex gadget of Fig.~\ref{fig:hav-12}, and each player 2's $(1,2)$-vertex is encoded by the same gadget in reverse color.

All White's vertex gadgets are aligned and all Black's vertex gadgets are aligned on a parallel line.
Whenever $(u, v)$ is an edge in $G$, we connect an exit of the vertex gadget representing $u$ to an entrance of a gadget encoding $v$ using wires and crossover gadgets.
Let $n$ be the number of vertices in $G$, since $G$ is of degree 3, we know that the number of edges is at most $\nicefrac{3n}{2}$.
The minimal size in terms of \gamename{havannah} cells for a smallest wire and the size of a crossover are constants.
Therefore the distance between Black's line and White's line is linear in $n$.
Note that, two wires of opposite colors might be needed to connect two vertex gadgets or a vertex gadget and a crossover.
Similarly, we can show that the distance between two vertices on Black's line or on White's line is constant. 


\begin{lemma}
  \label{lem:hav-reentering}
  If Black reenters a White's $(2,1)$-vertex gadget (Fig.~\ref{fig:hav-reentering-21}), and Black has no winning sequence of threats elsewhere, White wins.
\end{lemma}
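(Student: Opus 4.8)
The plan is to analyse the $(2,1)$-vertex gadget once the token has legitimately traversed it, and then to show that any further intrusion by Black lets White set up an immediate double threat, so that Lemma~\ref{lem:double-threat} closes the argument.

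First I would pin down the position of the gadget after it has been used. By Lemma~\ref{lem:hav-wire} a single passage of the token forced Black to play in exactly one of the two entrances, say $u$, and by Lemma~\ref{lem:hav-21} White was then forced to reply in the exit $s$; all the auxiliary cells ($a,\dots,j$ from the proof of Lemma~\ref{lem:hav-21}) are still empty. ``Reentering'' the gadget then means that Black plays in the other entrance $v$ --- the cells $u$ and $s$ being occupied, no other cell of the gadget can play the role of an entrance. I would next establish the crucial point (discussed below): since the White stone in $s$ is already on the board, Black's move in $v$ can no longer be continued by the move in $s$ that is what normally turns an entrance move into a forcing threat, so Black's reentry creates no ring threat. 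Combining this with the hypothesis that Black has no winning sequence of threats elsewhere, White is to move and is not threatened anywhere on the board.

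I would then exhibit White's reply: the cell marked in Fig.~\ref{fig:hav-reentering-21}, which together with the Black stones in $u$ and $v$ and the White stone in $s$ threatens to complete a White ring on two distinct cells. As White is not threatened, Lemma~\ref{lem:double-threat} gives the win. The case where Black first entered through $v$ and reenters through $u$ is symmetric, and the colour-reversed version for a Black vertex gadget needs no separate treatment.

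The main obstacle I expect is the ``no threat'' verification for the reentry move: one must check that, with an extra White stone in $s$ and a lone Black stone in each of $u$ and $v$, Black cannot build any ring threat of their own inside the gadget before White plays the double threat --- otherwise White could not ignore it and Lemma~\ref{lem:double-threat} would not apply directly. This is a finite inspection of the cells in Fig.~\ref{fig:hav-reentering-21}, together with identifying the precise double-threat cell and the two rings it closes; the rest is a direct appeal to lemmas already proved.
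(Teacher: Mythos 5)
There is a genuine gap. Your plan hinges on two claims that the gadget does not actually satisfy: (i) that after the reentry Black has no ring threats available inside the used gadget, and (ii) that White can answer the reentry with an \emph{immediate} double threat, so that Lemma~\ref{lem:double-threat} applies at once. Neither holds. In the used-and-reentered gadget of Fig.~\ref{fig:hav-reentering-21}, Black does have genuine threats at their disposal (the paper's own analysis in the proof of Theorem~\ref{hav-main-th} lists them: Black can threaten at $a$ or $b$, and after White's reply again at $c$ or $d$), so the position after Black's reentry is not one where White simply ``is not threatened and plays a double threat''. The actual argument is staged: White first answers the reentry with the preparatory move $e$, then parries each of Black's threats one by one --- this is where the hypothesis is used, since ``no winning sequence of threats elsewhere'' only guarantees that Black's threats (in the gadget or elsewhere) cannot be chained into a win, not that Black has none --- and only once Black has run out of threats does White play $k$ or $l$, creating the decisive double threat at $m$ and $n$.

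Relatedly, you misread the hypothesis when you conclude that ``White is to move and is not threatened anywhere on the board'': Black may perfectly well interleave simple threats elsewhere, and the proof must show White can absorb them, which is exactly what the paper's ``defend until Black is not threatening anymore'' step does. You did flag the no-threat verification as the main obstacle, but you resolve it in the wrong direction (asserting the threats do not exist rather than showing they are finitely many and all defensible), and without that the appeal to Lemma~\ref{lem:double-threat} is premature.
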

\begin{proof}
  If Black reenters a White's $(2,1)$-vertex by playing in $v$, White plays in $e$.
  As Black cannot initiate a winning sequence, whatever he plays White can defend until Black is not threatening anymore.
  Then White plays in $k$ or in $l$ with a decisive double threat in $m$ and $n$.
  \qed
\end{proof}

\begin{theorem}
  \label{hav-main-th}
  \Hav{} is \pspace{}-complete.
\end{theorem}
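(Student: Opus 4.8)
The plan is to combine two ingredients. Membership in \pspace{} has already been noted: on the boards our reduction produces a game of \gamename{havannah} has polynomially bounded length, so its game tree can be searched in polynomial space. For \pspace-hardness I would use the polynomial-time reduction $\phi$ from the \pspace-hard \emph{simplified} version of \GG{} (Section~\ref{sec:gg}) assembled in the previous subsection, and prove that Player~1 wins $(G,v_0)$ if and only if White, who moves first, wins $\phi((G,v_0))$; here White is identified with Player~1, the player that moves the token out of $V_1$ and that accordingly opens the game in the $(0,2)$-vertex gadget of $v_0$.

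I would first observe that $\phi$ runs in polynomial time: there are finitely many gadget types, each of constant size; the White and Black vertex gadgets sit on two parallel lines whose separation and whose within-line gaps are linear in $n$; each of the at most $\nicefrac{3n}{2}$ edges of $G$ is routed by a chain of wires through $O(n)$ crossover gadgets; so $\phi((G,v_0))$ is a hexagonal board of $O(n^2)$ cells whose contents are computable in polynomial time. The board is chosen large enough that every gadget lies far from all corners and sides and every ring threat occurring inside a gadget is short, so that the bridge and fork winning conditions are never activated and only the ring condition is relevant.

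The heart of the argument is the correctness of the simulation, which I would establish by maintaining, along any play in which neither player has yet blundered, the invariant that a \gamename{havannah} position carrying a single live threat encodes the \GG{} position whose token sits on the vertex whose gadget was entered last, the visited vertices being exactly the gadgets already ``used'', and with the side to move in \gamename{havannah} equal to the side whose turn it is to move the token. This invariant is transported by the gadget lemmas: a move into the entry of a wire forces the opponent to its exit (Lemma~\ref{lem:hav-wire} and its color-reverse); crossovers forward two chains independently (Lemma~\ref{lem:hav-crossroad}); and entering a vertex gadget forces the defender out through an out-edge --- forced at $(1,1)$-vertices (encoded by two opposite wires) and at $(2,1)$-vertices (Lemma~\ref{lem:hav-21}), and a free binary choice at $(1,2)$-vertices and at the $(0,2)$-vertex (Lemma~\ref{lem:hav-12} and Corollary~\ref{lem:hav-starting-vertex}) --- which is exactly a legal token move along an edge of $G$. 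Two points then remain. First, I would show that deviating from this discipline loses at once: by Lemmas~\ref{lem:simple-threat} and~\ref{lem:double-threat} not answering the live threat on its cell is punished immediately, and since the gadgets are pairwise well separated their stone patterns never interact, so no ``off-path'' stone --- in particular no premature move inside a remote gadget --- can create a ring or a fresh useful threat; hence both players may be assumed to follow the token walk. Second, I would observe that a player has no legal \GG{} move exactly when the token walk would send the token into an already-used gadget, i.e.\ forces that player to re-enter such a gadget; by Lemma~\ref{lem:hav-reentering} (and the analogous statements for the other vertex gadgets) re-entering is losing, the context stones left by the first use handing the opponent a decisive double threat, while the side condition ``no winning sequence of threats elsewhere'' holds because along the play the opponent has been following a \GG{} winning strategy. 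Combining all of this, under optimal play \gamename{havannah} on $\phi((G,v_0))$ faithfully simulates \GG{} on $(G,v_0)$, the first player to be stuck in \GG{} is the first forced to re-enter a gadget and so loses, and therefore the two games have the same winner, which yields \pspace-hardness.

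The main obstacle is precisely the two points just isolated --- ruling out every subversion of the gadgets. One must verify that no stone, played inside a gadget out of turn or anywhere in the wiring between gadgets, ever produces an unforeseen ring or an extra threat, and that the re-entering punishment is genuinely available regardless of which in-edge was used on the first visit and however the connecting wires have been filled in. This rests on the spatial separation of the gadgets together with a finite, if tedious, case analysis per gadget, most of which has been absorbed into the lemmas already proved.
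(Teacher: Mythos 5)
There is a genuine gap at exactly the point your proposal waves away. You argue that both players can be assumed to follow the token walk because ``no off-path stone \dots can create a ring or a fresh useful threat,'' justified by the spatial separation of gadgets. In this construction that is false: off-path threats do exist, and they live \emph{inside} single gadgets, so separation does not exclude them. Concretely, a player can play a threatening stone in a crossover gadget that is not currently active (e.g.\ White playing in $s$, $u$ or $x$ of Fig.~\ref{fig:hav-crossroad}), and a player can play threats inside a $(2,1)$-vertex gadget that has already been traversed (the cells $a$, $b$, $c$, $d$ of Fig.~\ref{fig:hav-reentering-21}). Moreover, your mechanism for punishing deviations --- Lemmas~\ref{lem:simple-threat} and~\ref{lem:double-threat} --- only applies to a player who is \emph{not threatened}; a cheating move that is itself a simple threat forces the opponent to answer it and cannot be ``punished immediately.'' So one must separately show that such threat-moves never help: that a sequence of them eventually runs dry, after which the cheater is strictly worse off. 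This is the crux of the paper's soundness argument: it maintains the invariant that, along a legal simulation, the non-deviating player can always initiate a winning sequence of threats, and then does an explicit case analysis showing that a gratuitous threat in a crossover gadget is outright losing (the answer creates a second independent source of winning threat sequences), while the threats available in an already-used $(2,1)$-vertex gadget are harmless (each is answered locally and the re-entry punishment of Lemma~\ref{lem:hav-reentering} is unaffected). Your proposal contains no substitute for this analysis, so the claim ``both players may be assumed to follow the token walk'' is unproven.

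A smaller inaccuracy: when a player is stuck in \GG{}, the only non-cheating continuation in \Hav{} is to re-enter a used $(2,1)$-vertex gadget (the other vertex gadgets have in-degree one, and their unique in-wire is already filled), so only Lemma~\ref{lem:hav-reentering} is needed there, not ``analogous statements for the other vertex gadgets.'' The rest of your outline (membership in \pspace{}, polynomial size of $\phi$, the use of Lemmas~\ref{lem:hav-wire}--\ref{lem:hav-12} and Corollary~\ref{lem:hav-starting-vertex} to transport the simulation invariant) matches the paper.
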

\begin{proof}
  We already mention that \Hav{} $\in$ \pspace{} and we just presented a polynomial time reduction from a \pspace-complete problem.
  We shall now prove that the reduction is sound, that is: player 1 is winning in $(G,v_0)$ if and only if White is winning in $\phi((G,v_0))$.
  First we show that the players in the game of \gamename{havannah} lose if they make a move which does not correspond to anything in the instance of \GG{}.
  Such a move will be called a \emph{cheating} move.
  The exhaustive list of non cheating moves is: defending a threat, playing at the end of a wire when the opponent had just play at its starting point, choosing which wire starting point $s$ or $t$ to block when the opponent had just play in $u$ (Fig.~\ref{fig:hav-12}), which forces them to take the other wire, and playing in the exit $s$ of a $(2,1)$-vertex gadget when the opponent had just play in $u$ or in $v$ (Fig.~\ref{fig:hav-21}).
  In order to conclude by invoking Lemma~\ref{lem:hav-wire} up to Corollary~\ref{lem:hav-starting-vertex}, we should show that making a threat is not helpful in all the above situations.
  Note that those Lemmas imply the following invariant: while White and Black play a legal game of \GG{}, at their turn, a player is threatened or their opponent can initiate a winning sequence of threats.
  There is only two kinds of places where one can play a threat: the crossroad gadget (Fig.~\ref{fig:hav-crossroad}) and the $(2,1)$-vertex gadget while already being entered (Fig.~\ref{fig:hav-reentering-21}).
  
  Let us start by showing that playing a threat in a crossroad gadget which does not defend a threat, that is, the action was occurring in a different place, is losing.
  If White plays in $s$ then Black plays in $t$ which is the starting point of a BW wire.
  And now, they are at least two places where Black can initiate a winning sequence of threats, so White loses (after possibly playing some additional but harmless threats).
  The same holds by reversing the colors or by reversing $s$ and $t$, and is not affected by whether or not stones have been played in $u$, $x$, $y$ and $v$.
  If, instead, White plays in $u$, Black answers in $x$, White answers in $y$ and Black plays in $v$, and again Black can initiate a winning sequence of threats in two places.
  If, instead, White plays in $x$, Black answers in $u$ and again White is losing.
  If, instead, Black plays in $x$, White plays in $y$ and Black plays in $v$, and now White plays their winning sequence of threats.
  
  Now, let us show that the threats in the already entered $(2,1)$-vertex gadget are harmless.
  Consider now Fig.~\ref{fig:hav-reentering-21}.
  If Black plays in $b$, White answers in $a$ and there is no more threats for Black.
  If Black plays in $a$, White answers in $b$.
  Black can threat again in $c$ or $d$ but White defends in $d$ or $c$, respectively, and there are no more threats.
  Note that this does not affect the fact that reentering in the gadget is losing for Black.
  Summing up, White and Black has to simulate a proper game of \GG{} in the instance $(G,v_0)$.
  
  We now show that if a player in the game of \gamename{havannah} has no more move in the corresponding \GG{} instance, they lose.
  The only non cheating move would be to reenter in a $(2,1)$-vertex but it is losing by Lemma~\ref{lem:hav-reentering}.
  \qed
\end{proof}
\section{TwixT}

Alex Randolph's \gamename{twixt} is one of the most popular connection games.
It was invented around 1960 and was marketed as soon as in 1962~\cite{Huber1983}.
In his book devoted to connection games, Cameron Browne describes \gamename{twixt} as one of the most popular and widely marketed of all connection games~\cite{Browne2005}.
We now briefly describe the rules of \gamename{twixt} and refer to Moesker's master's thesis for an introduction and a mathematical approach to the strategy, and the description of a possible implementation~\cite{Moesker2009}.

\gamename{twixt} is a 2-player connection game played on a \gamename{go}-like board.
At their turn, player White and Black place a pawn of their color in an unoccupied place.
Just as in \gamename{havannah} and \gamename{hex}, pawns cannot be taken, moved, nor removed.
When 2 pawns of the same color are spaced by a knight's move, they are linked by an edge of their color, unless this edge would cross another edge.
At each turn, a player can remove some of their edges to allow for new links.
The goal for player White (resp.~Black) is to link top and bottom (resp.~left and right) sides of the board.
Note that sometimes, a player could have to choose between two possible edges that intersect each other.
The \textit{pencil and paper} version \gamename{twixtpp} where the edges of a same color are allowed to cross is also famous and played online.

As the length of a game of \gamename{twixt} is polynomially bounded, exploring the whole tree can be done with polynomial space using a minimax algorithm.
Therefore \gamename{twixt} is in \pspace.

Mazzoni and Watkins have shown that \textsc{3-sat} could be reduced to single-player \gamename{twixt}, thus showing \np-completeness of the variant~\cite{MazzoniW1997}.
While it might be possible to try and adapt their work and obtain a reduction from \textsc{3-qbf} to standard two-player \gamename{twixt}, we propose a simpler approach based on \gamename{hex}. 
The \pspace-completeness of \gamename{hex} has already been used to show the \pspace{}-completeness of \gamename{amazons}, a well-known territory game~\cite{FurtakKUB2005}.

We now present how we construct from an instance $G$ of \gamename{hex} an instance $\phi(G)$ of \gamename{twixt}.
We can represent a cell of \gamename{hex} by the $9\times 9$ \gamename{twixt} gadgets displayed in Fig.~\ref{fig:twixt-cell}.
Let $n$ be the size of a side of $G$, Fig.~\ref{fig:twixt-board} shows how a \gamename{twixt} board can be paved by $n^2$ \gamename{twixt} cell gadgets to create a \gamename{hex} board.

\begin{figure}
  \centering
  \subfloat[Empty cell.]{
    \centering
    \includegraphics{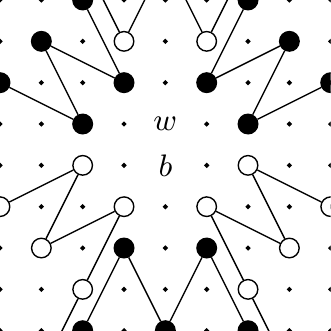}
    \label{fig:tw-empty}
  }
  \hfill
  \subfloat[White cell.]{
    \centering
    \includegraphics{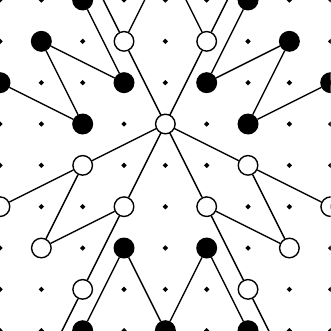}
    \label{fig:tw-white}
  }
  \hfill
  \subfloat[Black cell.]{
    \centering
    \includegraphics{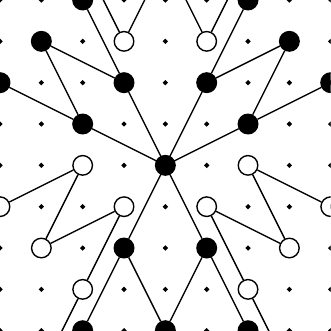}
    \label{fig:tw-black}
  }
  \caption{Basic gadgets needed to represent cells.}
  \label{fig:twixt-cell}
\end{figure}
 
It is not hard to see from Fig.~\ref{fig:tw-empty} that in each gadget of Fig.~\ref{fig:twixt-board}, move $w$ (resp.~$b$) is dominating for White (resp.~Black).
That is, playing $w$ is as good for White as any other move of the gadget.
We can also see that the moves that are not part of any gadget in Fig.~\ref{fig:twixt-board} are dominated for both players.
As a result, if player Black (resp.~White) has a winning strategy in $G$, then player Black has a winning strategy in $\phi(G)$.
Thus, $G$ is won by Black if and only if $\phi(G)$ is won by Black.
Therefore determining the winner in \gamename{twixt} is at least as hard as in \gamename{Hex}, leading to the desired result.

\begin{figure}[h!]
  \centering
  \includegraphics{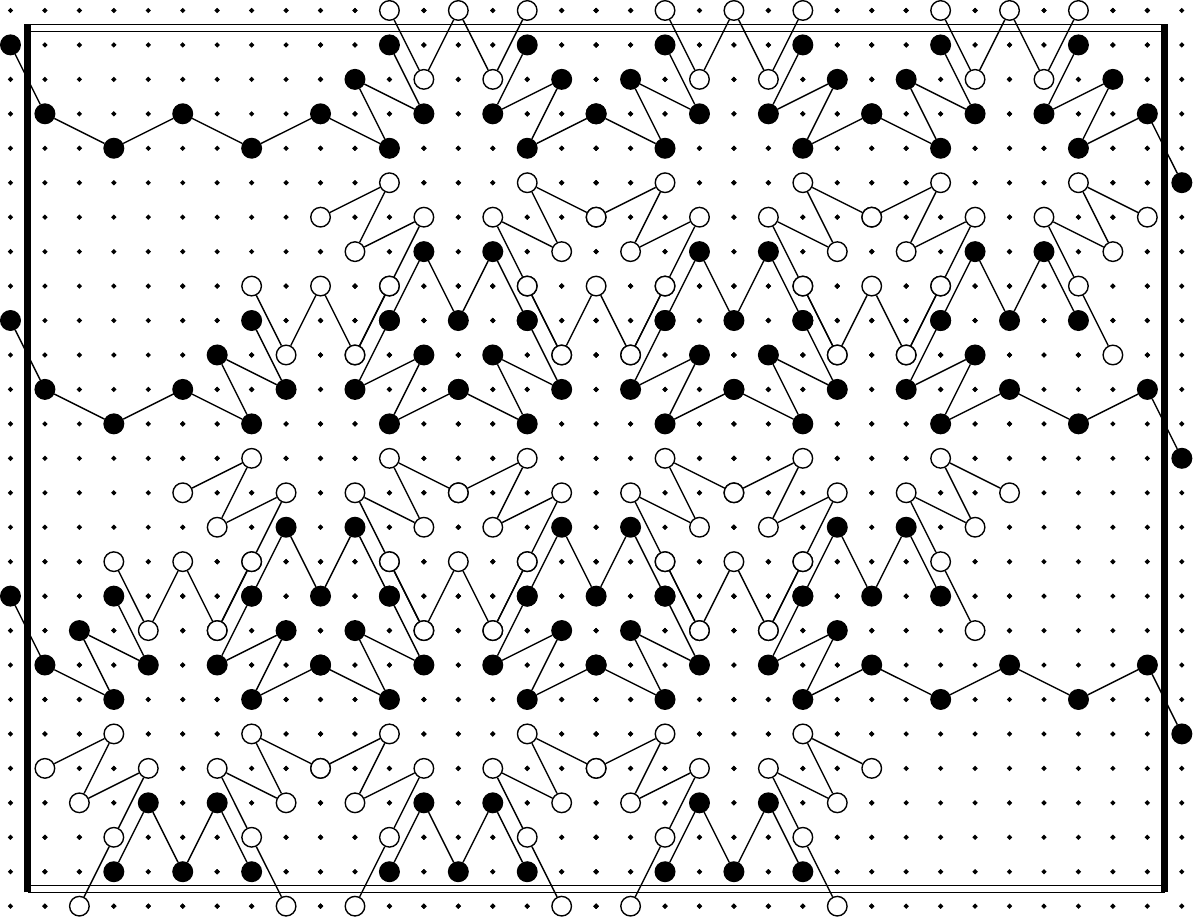}
  \caption{Empty $3\times 3$ \gamename{hex} board reduced to a \gamename{twixt} board.}
  \label{fig:twixt-board}
\end{figure}






\begin{theorem}
  \gamename{twixt} is \pspace{}-complete.
\end{theorem}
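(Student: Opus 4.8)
The plan is to derive \pspace-completeness by pairing the membership argument with a polynomial-time many-one reduction from \Hex{}, which is \pspace-hard by Reisch's theorem~\cite{Reisch1981}. Membership in \pspace{} has already been noted: a game of \gamename{twixt} on an $m\times m$ board lasts at most $m^2$ moves, so a minimax recursion solves it in polynomial space. For hardness I would use exactly the map $\phi$ described above: an $n\times n$ instance $G$ of \Hex{} is sent to the \gamename{twixt} position obtained by paving an $O(n)\times O(n)$ board with $n^2$ copies of the $9\times 9$ cell gadget of Fig.~\ref{fig:twixt-cell}, pre-filling each gadget that corresponds to an already-claimed \Hex{} cell (for the empty-board version of \Hex{} no gadget is pre-filled), and attaching the four border families of gadgets to White's top/bottom and Black's left/right target sides. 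Each gadget has constant size and there are $n^2$ of them, so $\phi(G)$ has size $O(n^2)$ and is clearly computable in polynomial time.

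The heart of the proof is soundness: Black wins $G$ if and only if Black wins $\phi(G)$ (equivalently for White, since neither \Hex{} nor this construction admits draws). I would establish this through the domination claims indicated after Fig.~\ref{fig:twixt-board}. First, in every cell gadget the move $w$ is dominating for White and $b$ is dominating for Black: inspecting the empty-cell picture (Fig.~\ref{fig:tw-empty}) one checks that $w$ already realizes or threatens every link through the gadget that White could ever profit from, so no other move inside the gadget does strictly more, and symmetrically for $b$; once $w$ (resp.\ $b$) has been played, the gadget behaves, with respect to the knight's-move link relation and the no-crossing constraint, exactly like a White (resp.\ Black) \Hex{} cell — in particular two neighbouring gadgets become link-connected for a colour precisely when both have been claimed by that colour, reproducing the six-neighbour adjacency pattern of \Hex{}, and the edge-crossing rule never blocks a link a player needs, because the gadget geometry keeps the two colours' potential links disjoint. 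Second, every board point outside all gadgets is dominated for both players, since it can belong to no link that helps connect a pair of target sides. Combining these facts, an optimal play in $\phi(G)$ may be assumed to use only the moves $w$ and $b$ in the gadgets, and then the \gamename{twixt} game is isomorphic to the \Hex{} game on $G$; hence the two games have the same winner, and \pspace-completeness follows.

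The main obstacle I expect is turning the domination arguments into rigorous case analysis rather than appeals to the figure. One must verify that (i) the knight's-move linking rule together with the no-crossing constraint really makes each $9\times 9$ gadget simulate a single \Hex{} cell with the correct adjacencies to all of its neighbours, including at the board boundary; (ii) a player gains nothing by playing extra pawns inside a gadget or on the inter-gadget filler, so that an ``at least as good as'' (domination) argument transports winning strategies in both directions; and (iii) the four border gadget families faithfully encode the coloured sides of the \Hex{} board. These are precisely the finite checks the figures are designed to make transparent, so the remaining work is the routine verification of each adjacency and each potential edge crossing in the fixed $9\times 9$ pattern.
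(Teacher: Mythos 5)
Your proposal matches the paper's argument: membership via the polynomially bounded game length, and hardness via the same reduction from \Hex{} using the $9\times 9$ cell gadgets, with soundness resting on the domination of the moves $w$ and $b$ inside each gadget and the dominated status of all off-gadget points, so that optimal play in $\phi(G)$ mirrors play in $G$ and the winners coincide. This is essentially the same approach as the paper, including the finite case checks on the gadget geometry that the figures are meant to make transparent.
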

\begin{withproofs}
\begin{proof}
  We already mentioned that \gamename{twixt} $\in$ \pspace.
  We presented a polynomial time reduction from a \pspace-complete problem.
  We shall prove that the reduction is sound.
  \qed
\end{proof}
\end{withproofs}

Observe that the proposed reduction holds both for the classic version of \gamename{twixt} as well as for the \emph{pencil and paper} version \gamename{twixtpp}.
Indeed, the reduction does not require the losing player to remove any edge, so it also proves that \gamename{twixtpp} is \pspace-hard.


\section{Conclusion}

This paper establishes the \pspace{}-completeness of two important connection games, \gamename{havannah} and \gamename{twixt}.
The proof for \gamename{twixt} is a reduction from \gamename{hex} and applied to \gamename{twixtpp}.
The proof for \gamename{havannah} is more involved and is based on the Generalized Georgraphy problem restricted to bipartite graphs of degree 3.
This \gamename{havannah} reduction only used the loop winning condition, but it is easy to show that \gamename{havannah} without the loop winning condition can simulate \gamename{hex} and is \pspace{}-hard as well.
For both reductions, the size of the resulting game is only linearly larger than the size of the input instance.

The complexity of other notable connection games remains open.
In particular, the following games seem to be good candidates for future work on the complexity of connection games.

In \gamename{lines of action}, each player starts with two groups of pieces and tries to connect all their pieces by moving these pieces and possibly capturing opponent pieces~\cite{WinandsBS2010}.
While the goal of \gamename{lines of action} clearly makes it a connection game, the mechanics distinguishes it from more classical connection games as no pieces are added to the board and existing pieces can be moved or removed.
As a result, it is not even clear that \gamename{lines of action} is in \pspace{}.

\gamename{slither} is closer to \gamename{hex} but each move actually consists of putting a new stone on the board and possibly moving another one.
Obtaining a \pspace{}-hardness result for \gamename{slither} is not so easy since the rules allow a player to influence two different areas of the board in a single turn.


\bibliographystyle{plain}

\end{document}